\newtheorem{theorem}{Theorem~}
\newtheorem{remark}{Remark~}
\newtheorem{lemma}{Lemma~}
\newtheorem{property}{Problem~}
\begin{document}
%
\title{Linear Quadratic Synchronization of Multi-Agent Systems: A Distributed Optimization Approach}
%
%
%

\author{Qishao~Wang,
        Zhisheng~Duan$^\star$,~\IEEEmembership{Member,~IEEE,}
        Jingyao~Wang,
        and~Guanrong~Chen,~\IEEEmembership{Fellow,~IEEE}  
\thanks{Q.~Wang and Z.~Duan are with the State Key Laboratory for Turbulence and Complex Systems, Department of Mechanics and Engineering Science, College of Engineering, Peking University, Beijing 100871, P.~R.~China (email: duanzs@pku.edu.cn).}
\thanks{J.~Wang are with the Department of Automation, Xiamen University, Xiamen 361000, P.~R.~China.}
\thanks{G.~Chen is with the Department of Electronic Engineering, City University of Hong Kong, Hong Kong SAR, P.~R.~China.}}


%
%

\markboth{Journal of \LaTeX\ Class Files,~Vol.~14, No.~8, August~2015}%
{Shell \MakeLowercase{\textit{et al.}}: Bare Demo of IEEEtran.cls for IEEE Journals}
%



\maketitle

\begin{abstract}
The distributed optimal synchronization problem with linear quadratic cost is solved in this paper for multi-agent systems with an undirected communication topology. For the first time, the optimal synchronization problem is formulated as a distributed optimization problem with a linear quadratic cost functional that integrates quadratic synchronization errors and quadratic input signals subject to agent dynamics and synchronization constraints. By introducing auxiliary synchronization state variables and combining the distributed synchronization method with the alternating direction method of multiplier (ADMM), a new distributed control protocol is designed for solving the distributed optimization problem. With this construction, the optimal synchronization control problem is separated into several independent subproblems: a synchronization optimization, an input minimization and a dual optimization. These subproblems are then solved by distributed numerical algorithms based on the Lyapunov method and dynamic programming. Numerical examples for both homogeneous and heterogeneous multi-agent systems are given to demonstrate the effectiveness of the proposed method.
\end{abstract}

\begin{IEEEkeywords}
Distributed Optimization; Synchronization; Heterogeneous Systems; Control System.
\end{IEEEkeywords}

%
\IEEEpeerreviewmaketitle

\section{Introduction}
The synchronization control problem for multi-agent systems has attracted considerable attention due to its various applications to many important tasks \cite{OlfatiSR2007,RenW2007}, such as formation flying of unmanned air vehicles, spacecraft attitude cooperative control, distributed sensor configuration and information flow control. A great number of existing works on multi-agent systems mainly focus on the synchronization problem on networks with various topologies \cite{LiZK2010,RenW2005,WenGH2017TSMCS}, communication constraints \cite{WenGH2016,YuWW2013}, complex dynamics \cite{DuHB2017}, final state restrictions \cite{WenGH2017TSMCS2}, robustness \cite{LiZK2017} and so on. In practice, it is desirable to improve some control performances such as convergence rate and control energy cost while achieving synchronization, which is typically the goal of distributed optimization.

The distributed optimization problem for multi-agent systems has been widely investigated recently. Some earlier works are presented in \cite{ShiG2013,LinP2014}, where the dynamics of agents are described by integrators. Combining synchronization control methods with optimization techniques, the optimal synchronization problem was solved for double-integrator dynamics \cite{QiuZR2016} and then extended to Euler-Lagrangian systems \cite{ZhangYQ2017}, where the final synchronization state is required to minimize a global cost functional. For general linear dynamics \cite{ZhaoY2017}, cooperative optimization is achieved through local interactions by implementing edge- or node-based adaptive algorithms.
To optimize the transient response of the synchronization process, the objective functional is reformulated to be an integral of synchronization error over time in \cite{WangJY2013,JiaoQ2016}. $H_\infty$ and $H_2$ control protocols are proposed in \cite{WangJY2013} for multi-agent systems to achieve synchronization synthesised with desired transient performance. $\mathcal{L}_2$-gain output-feedback synchronization problems for both homogeneous and heterogeneous multi-agent systems are addressed in \cite{JiaoQ2016}, to achieve synchronization and meanwhile limit the $\mathcal{L}_2$-gain of the synchronization error. When combining the transient response of synchronization together with the control energy cost, the distributed optimization problem for linear multi-agent systems becomes the distributed linear quadratic synchronization problem, where the objective functional integrates the quadratic synchronization error and quadratic input signals.

One case of distributed linear quadratic synchronization is the linear quadratic regulator (LQR), where all the agents are required to be stabilized with a quadratic cost functional minimized \cite{CaoY2010,KempkerPL2014,NguyenDH2014}. The LQR optimal synchronization problem is studied in \cite{CaoY2010}, where the communication topology corresponds to a complete graph. The overall LQR control problem is separated into independent local subproblems for coordinated linear systems thereby deriving a lower-order distributed numerical algorithm in \cite{KempkerPL2014}. For an undirected communication topology, in \cite{NguyenDH2014} a distributed stabilizing control approach is taken to minimize the LQR performance index, where the involving weighted matrices have to be properly chosen. 
Based on the algebraic Riccati equation, optimal control protocols with diffusive couplings are presented in \cite{MontenbruckJM2015} for linear synchronization problems with quadratic cost and the results are extended to a static output feedback scenario in \cite{ZengS2017}. For the leader-follower synchronization problem \cite{ModaresH2017}, the Hamilton-Jacobi-Bellman equation is utilized to find an optimal control protocol based on distributed estimation of the leader state for each follower agent. It should be noted, however, that despite the considerable advances on distributed optimization, the problem of designing distributed optimal synchronization algorithms with general linear quadratic cost functionals remains a challenge.

Motivated by the above observations, a distributed optimization algorithm is proposed in this paper to achieve optimal synchronization minimizing a linear quadratic cost for multi-agent systems with an undirected communication topology. By introducing some auxiliary synchronization state variables, the optimal synchronization problem is formulated as a distributed optimization problem subject to reguired agent dynamics and synchronization constraints with a linear quadratic cost functional that integrates quadratic synchronization error and quadratic input signals. A new distributed control protocol design framework is proposed by combining the distributed synchronization method with the alternating direction method of multiplier (ADMM). With this construction, the optimal synchronization control problem is separated to several independent subproblems: a synchronization optimization, an input minimization and a dual optimization. Then, a distributed numerical algorithm corresponding to each subproblem is designed based on the Lyapunov method and dynamic programming. Comparing with the literature on distributed optimization control, the contributions of this paper are three-fold, as summarized below:
\begin{enumerate}[]
\item A new distributed control protocol design is proposed by combining the distributed synchronization method with the ADMM for the linear quadratic synchronization control problem. For the first time, a variant of the generalized ADMM algorithm is applied to separate the optimal synchronization control problem to several independent subproblems that can be solved in a distributed way. A further convergence analysis shows that the control sequence generated by the proposed algorithm converges to the optimal solution of the linear quadratic synchronization control problem. This new framework is very desirable for distributed control protocol design since the communication topology and the agent dynamics are successfully separated, making the design and analysis much easier.
\item The synchronization control problem for multi-agent systems with linear quadratic cost is solved by a single-agent-level algorithm. As indicated in \cite{MontenbruckJM2015}, the quadratic term of the Laplacian matrix appears in the objective functional and in the Riccati equation, which brings more difficulties in order reduction. In this paper, the optimal synchronization control problem is divided into synchronization and optimal control by the ADMM technique. In the synchronization step, the optimal synchronization state for each iteration is solved by differential equations using local information. Then, optimal control input can be designed individually for each agent in the optimal control step with the synchronization state fixed. Therefore, the design algorithm for optimal control has the same order as each agent in both steps. Moreover, the order reduction does not introduce additional constraints on the communication topology or the weighted matrices in the cost functional.
\item The distributed numerical algorithm is valid for both homogenous and heterogenous linear systems with eigenvalues either inside or on the unit circle, or for the eigenvalues outside the unit circle respectively. By an application of the ADMM technique, the topology issue is removed from the optimal control input design step so that the design algorithm can be easily applied to general heterogenous linear systems. On the other hand, the dynamic programming scheme used in solving the optimal control input ensures a stable final synchronization state for both stable and unstable dynamics.
\end{enumerate}

The rest of this paper is organized as follows. In Section \ref{s:Preliminaries and Problem Formulation}, some preliminaries and the formulation of the optimal synchronization problem with linear quadratic cost are presented. A variant of the generalized ADMM algorithm and its convergence analysis for synchronization control in a centralized manner are presented in Section \ref{s:Distributed Linear Quadratic Synchronization Control Problem Design}. Section \ref{s:subproblemalgorithm} develops distributed algorithms for the synchronization, the control design and the overall optimal synchronization problem, respectively. The performances of the proposed algorithms are illustrated by numerical examples in Section \ref{s:Simulations}, with conclusions given in Section \ref{s:conclusion}.

The notations used in this paper are as follows. The set of $n$-dimensional real vectors and $m\times n$ real matrices are indicated by $\mathbb{R}^n$ and $\mathbb{R}^{m\times n}$, $\otimes$ denotes the Kronecker product of matrices, and $\|\cdot\|$ denotes the Euclidean norm of the corresponding vector and matrix. For $x_i\in\mathcal{R}^{n_i},~A_i\in\mathcal{R}^{m_i\times n_i}~i=1,\cdots,m$, define $col\{x_1,\cdots,x_m\}\triangleq [x_1^T,\cdots,x_m^T]^T$ and $diag\{A_1,\cdots,A_m\}$ be a block diagonal matrix.

\section{Preliminaries and Problem Formulation}\label{s:Preliminaries and Problem Formulation}
Consider a network of $N$ heterogeneous agents with discrete-time linear dynamics in the following form
\begin{equation}\label{e:systemoriginal}
x_i(k+1)=A_ix_i(k)+B_iu_i(k),~i\in\{1,2,\cdots,N\},
\end{equation}
where $x_i\in\mathbb{R}^n$ is the state of the $i$-th agent, $u_i\in\mathbb{R}^m$ is its control input and $A_i\in \mathbb{R}^{n\times n},~B_i\in \mathbb{R}^{n\times m}$ are constant matrices.

The agents are assumed to exchange information through a communication network described by an undirected and connected graph $\mathcal{G}=(\mathcal{V},\mathcal{E})$, with $\mathcal{V}=\{v_1,~v_2,\cdots,~v_N\}$ being the set of nodes and $\mathcal{E}\subset\mathcal{V}\times\mathcal{V}$ being the set of edges. In the graph $\mathcal{G}$, $(v_i,~v_j)\in\mathcal{E}$ means that the $i$-th agent can exchange information with the $j$-th agent. The weighted adjacency matrix of the graph $\mathcal{G}$ is defined as $\mathcal{A}=(a_{ij})\in\mathbf{R}^{N\times N}$, where $a_{ii}=0$, and $a_{ij}=a_{ji}>0$ if $(v_i,~v_j)\in\mathcal{E}$. The Laplacian matrix of $\mathcal{G}=(\mathcal{V},\mathcal{E})$ is denoted by $\mathcal{L}=(l_{ij})\in\mathbf{R}^{N\times N}$, where $l_{ii}=\sum_{j=1}^N a_{ij}$,~$l_{ij}=-a_{ij}$ for $i\neq j$. And $\mathcal{V}_i=\{j\in\mathcal{V}:\{i,j\}\in\mathcal{E}\}$ denotes the neighborhood set of $i$.

The first problem considered in this paper is to find controllers $u_i$ to guarantee the synchronization of all agents, i.e.,
\begin{equation}\label{e:controlgoal}
\lim_{k\rightarrow \mathcal{N}}\|x_i(k)-x_j(k)\|=0,~\forall i,j\in\{1,2,...,N\},
\end{equation}
where $\mathcal{N}$ denotes the total (finite) steps needed to achieve synchronization.
Denote the finite synchronization state as $z_i,~i\in\{1,2,...,N\}$. Then, the synchronization condition (\ref{e:controlgoal}) can be rewritten as
\begin{equation}\label{e:controlgoal2}
\begin{aligned}
&\lim_{k\rightarrow \mathcal{N}} x_i(k)=z_i,~i\in\{1,2,...,N\},\\
&(\mathcal{L}\otimes I_n)Z=0,
\end{aligned}
\end{equation}
where $Z\triangleq col\{z_1,z_2,...,z_N\}$.
Define the synchronization error vector of the network as
\begin{equation}\label{e:synerror}
e_i(k)=x_i(k)-z_i,~~i\in\{1,2,...,N\}.
\end{equation}
Let the control input sequence be $u_i \triangleq col\{u_i(0),u_i(1),\cdots,u_i(\mathcal{N}-1)\}$, $U\triangleq col\{u_1,u_2,\cdots,u_N\}$, and the cost functional
\begin{equation}\label{e:cost}
\begin{aligned}
J(U,Z)=&\sum_{i=1}^{N}\left\{\sum_{k=0}^{\mathcal{N}-1}\left[e_i^T(k)Q_ie_i(k)+u_i^T(k)R_iu_i(k)\right]+e_i^T(\mathcal{N})Q_{i\mathcal{N}}e_i(\mathcal{N})\right\},
\end{aligned}
\end{equation}
for some $Q_{i\mathcal{N}},Q_i\in\mathbb{R}^{n\times n},R_i\in \mathbb{R}^{m\times m}$ with $Q_{i\mathcal{N}}\geq0,Q_i\geq0,R_i>0$. Physically, this quadratic cost functional is composed of the energies of the error signal and of the input signal. It can be used as a performance index to quantify the swiftness, vibration and energy consumption of the network synchronization. Consequently, the second problem is to design a control sequence $U^\star$ that minimizes (\ref{e:cost}) subject to (\ref{e:systemoriginal}), which implicitly achieves synchronization as $\mathcal{N}$ becomes large enough.
\begin{property}\label{p:LQR}
Combining the two problems mentioned above, the linear quadratic synchronization control problem can be expressed as
\begin{equation}\label{e:optimizationproblem}
\begin{aligned}
\min \limits_{U,Z}~&J(U,Z)\\
s.t.~&x_i(k+1)=A_ix_i(k)+B_iu_i(k),~i\in\{1,2,...,N\}\\
&(\mathcal{L}\otimes I_n)Z=0.
\end{aligned}
\end{equation}
\end{property}

\begin{remark}\label{LQRsignificance}
In the cost functional $J(u_i,z_i,x_0)$, the terms $e_i^T(k)Q_ie_i(k)$ and $e_i^T(\mathcal{N})Q_{i\mathcal{N}}e_i(\mathcal{N})$ are introduced to improve the synchronization rate and the final synchronization precision respectively. The weighted matrices $Q_i$ and $Q_{i\mathcal{N}}$ are set to be positive semi-definite so that the familiar output synchronization can be regarded as a special case of Problem \ref{p:LQR} here. For example, if the output of agent $i$ is described by $y_i(k)=C_ix_i(k)$, the synchronization error becomes $e_{io}(k)=C_ix_i(k)-C_iz_i=C_ie_i(k)$, where $C_i$ may not be of full row rank. Thus, the output synchronization error term in the cost functional can be selected as $e_i(k)C_i^TC_ie_i(k)$. In this case, to achieve output synchronization, matrices $Q_{i} $ and $Q_{i\mathcal{N}}$ can be selected as $Q_{i}=Q_{i\mathcal{N}}=C_i^TC_i\ge 0$. Moreover, $u_i^T(k)R_iu_i(k)$ acts as a control penalty on the control input power. In fact, without this term the amplitude of the control input will go to infinity since maintaining smaller synchronization error requires larger control input. Thus, the weighted matrix $R_i$ should be positive definite to restrict all the components of the control input vector within a reasonable range. In a real design, the selection of $Q_i, Q_{i\mathcal{N}}$ and $R_i$ implies  a tradeoff among synchronization rate, final synchronization error and control energy.
\end{remark}

\section{A Centralized Algorithm for Synchronization Control}\label{s:Distributed Linear Quadratic Synchronization Control Problem Design}
In this section, consider the optimal linear quadratic synchronization control problem (\ref{e:optimizationproblem}).
Using the method of multipliers, the augmented Lagrangian is first formulated as follows:
\begin{equation}\label{e:Lagrangian}
\begin{aligned}
L_\rho(U,Z,\Lambda)=&\sum_{i=1}^{N}\left\{e_i^T(\mathcal{N})Q_{i\mathcal{N}}e_i(\mathcal{N})+\sum_{k=0}^{\mathcal{N}-1}\left[e_i(k)^TQ_ie_i(k)+u_i(k)^TR_iu_i(k)\right]\right\}\\
&+\Lambda^T(\mathcal{L}\otimes I_n)Z+\frac{\rho}{2}Z^T(\mathcal{L}\otimes I_n)Z,
\end{aligned}
\end{equation}
where $\Lambda\triangleq col\{\lambda_1,\lambda_2,...,\lambda_n\}$ is the Lagrangian multipliers and $\rho>0$ is the augmented Lagrangian parameter. Then, a variant of the ADMM algorithm proposed in \cite{BoydS2010,GaoX2017} can be applied, which consists of the iterations (\ref{se:admmC}).
\begin{algorithm}[htb]
\caption{Centralized Linear Quadratic Synchronization Control Algorithm}
\label{a:ADMM}
Initialize $U^0, Z^0$ and $\Lambda^0$.  For $q=0,1,...$, until convergent:
\begin{subequations}\label{se:admmC}
\begin{equation}\label{e:admmZ}
\begin{aligned}
Z^{q+1}=&\arg \min \limits_{Z}\left\{L_\rho(U^{q},Z,\Lambda^q)+\textstyle\frac{1}{2}\displaystyle(Z-Z^q)^TG(Z-Z^q)\right\},
\end{aligned}
\end{equation}
\begin{equation}\label{e:admmU}
\begin{aligned}
U^{q+1}=&\arg \min\limits_{U}\left\{L_\rho(U,Z^{q+1},\Lambda^q)+\textstyle\frac{1}{2}\displaystyle(U-U^q)^TH(U-U^q)\right\},
\end{aligned}
\end{equation}
\begin{equation}\label{e:admmL}
\lambda^{q+1}_i=\lambda^q_i+\rho z_i^{q+1},~~i\in\{1,2,...,N\},
\end{equation}
\end{subequations}
where $G\triangleq diag\{G_1,G_2,\cdots,G_N\}$ and $H\triangleq diag\{I_\mathcal{N}\otimes H_1,I_\mathcal{N}\otimes H_2,\cdots,I_\mathcal{N}\otimes H_N\}$.
\end{algorithm}

In Algorithm \ref{a:ADMM}, matrices $G_i$ and $H_i$ are chosen positive matrices. This algorithm divides the linear quadratic synchronization control problem (\ref{e:optimizationproblem}) into  a $Z$-minimization step (\ref{e:admmZ}), a $U$-minimization step (\ref{e:admmU}) and a dual variable update step (\ref{e:admmL}), which separates the node dynamics and the communication topology. Therefore, step (\ref{e:admmU}) can be regarded as a linear quadratic tracking problem with respect to individual subsystems and steps (\ref{e:admmZ}, \ref{e:admmL}) are used to achieve synchronization on the communication topology. In fact, Algorithm \ref{a:ADMM} is a variant of the generalized ADMM proposed in \cite{GaoX2017}. Then, the convergence analysis of Algorithm \ref{a:ADMM} is presented in the following Theorem whose proof can be found in Appendix.


\begin{theorem}\label{t:convergence}
Suppose that $Q_{i\mathcal{N}}\geq0,Q_i\geq0,R_i>0$ and the final time step $\mathcal{N}$ is finite. Then, the sequence $\{U^q,Z^q\}$ generated by Algorithm \ref{a:ADMM} converges to an optimal solution if the following conditions are satisfied:
\begin{equation}\label{e:convergencecon}
G_i>0, H_i>\left({L_\delta}+\frac{L_\delta^2}{2\sigma_{min}\{R_i\}}\right)I_m,
\end{equation}
where $L_\delta$ is the Lispschitz constant for the gradient of the cost functional.
\end{theorem}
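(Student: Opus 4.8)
The approach is to view Problem~\ref{p:LQR}, after eliminating the agent dynamics, as a two-block linearly-constrained convex program with a \emph{nonseparable} quadratic objective, and to run the convergence proof for the generalized ADMM variant of \cite{GaoX2017}; the substantive work is checking that the proximal weights $G,H$ in Algorithm~\ref{a:ADMM} dominate the coupling between $U$ and $Z$ in the cost. First I would eliminate \eqref{e:systemoriginal}: since $\mathcal{N}$ is finite, every $x_i(k)$, hence every error $e_i(k)=x_i(k)-z_i$, is affine in $U$ (with the initial states as data), so $J(U,Z)$ is a convex quadratic in $(U,Z)$ with globally Lipschitz gradient --- this is the constant $L_\delta$ of \eqref{e:convergencecon}. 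Because $R_i>0$, $U\mapsto J(U,Z)$ is strongly convex uniformly in $Z$ with modulus controlled by $\sigma_{min}\{R_i\}$; moreover $J\ge 0$ and the feasible set $\{Z:(\mathcal{L}\otimes I_n)Z=0\}$ is a nonempty polyhedron (it contains $Z=0$), so a minimizer exists, and since the constraints are affine, strong duality holds: there is a KKT triple $(U^\star,Z^\star,\Lambda^\star)$ with $(\mathcal{L}\otimes I_n)Z^\star=0$, $\nabla_UJ(U^\star,Z^\star)=0$, and $\nabla_ZJ(U^\star,Z^\star)+(\mathcal{L}\otimes I_n)\Lambda^\star=0$.

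Next I would write the first-order optimality conditions of the three steps of Algorithm~\ref{a:ADMM} --- from \eqref{e:admmZ}, $\nabla_ZL_\rho(U^q,Z^{q+1},\Lambda^q)+G(Z^{q+1}-Z^q)=0$; from \eqref{e:admmU}, $\nabla_UL_\rho(U^{q+1},Z^{q+1},\Lambda^q)+H(U^{q+1}-U^q)=0$; together with the dual ascent \eqref{e:admmL} --- and combine them with the KKT relations. With $\Delta Z^q=Z^{q+1}-Z^q$, $\Delta U^q=U^{q+1}-U^q$, I would introduce the Lyapunov function
\[
\Phi^q=\tfrac{1}{\rho}\|\Lambda^q-\Lambda^\star\|^2+\|Z^q-Z^\star\|_G^2+\|U^q-U^\star\|_H^2
\]
(augmented if needed by a $\rho$-weighted quadratic in $(\mathcal{L}\otimes I_n)Z^q$) and prove a descent inequality
\[
\Phi^{q+1}\le\Phi^q-\|\Delta Z^q\|_G^2-\|\Delta U^q\|_{\widehat H}^2-(\text{nonnegative residual terms}),
\]
where the per-agent blocks of $\widehat H$ are $H_i-\big(L_\delta+L_\delta^2/(2\sigma_{min}\{R_i\})\big)I_m$ (tensored with $I_\mathcal{N}$), so that \eqref{e:convergencecon} is exactly what makes $\widehat H\succ 0$ and $G\succ 0$.

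I expect this descent inequality to be the main obstacle. Because $J$ is not separable in $(U,Z)$, the $U$-subproblem \eqref{e:admmU} is solved against a gradient taken at the current iterate rather than at the joint optimum, so when estimating $\Phi^q-\Phi^{q+1}$ one meets cross terms such as $\langle\nabla_UJ(U^{q+1},Z^{q+1})-\nabla_UJ(U^{q+1},Z^q),U^{q+1}-U^\star\rangle$ and $\langle\nabla_UJ(U^q,Z^{q+1}),\Delta U^q\rangle$. Applying the descent lemma (quadratic upper bound with constant $L_\delta$) to the second yields the $L_\delta I_m$ contribution; bounding the first by Lipschitz continuity of $\nabla J$ and Young's inequality produces a $L_\delta^2/(2\varepsilon)$ term, and the strong convexity of $J$ in $U$ coming from the $u_i^TR_iu_i$ terms supplies a negative $\varepsilon$-proportional term which, taking $\varepsilon=\sigma_{min}\{R_i\}$, cancels the remainder --- leaving precisely the per-agent requirement on $H_i$; for the $Z$-block, since the constraint lives only in $Z$, $G_i>0$ already makes \eqref{e:admmZ} well posed and supplies the $\|\Delta Z^q\|_G^2$ term.

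Finally, the standard closing argument: summing the descent inequality over $q$ gives $\sum_q(\|\Delta Z^q\|^2+\|\Delta U^q\|^2)<\infty$, so $\Delta Z^q\to 0$, $\Delta U^q\to 0$, and the dual step forces the constraint residual $(\mathcal{L}\otimes I_n)Z^{q+1}\to 0$; monotonicity of $\Phi^q$ bounds $\{(U^q,Z^q,\Lambda^q)\}$, so a subsequence converges, and letting the residual terms vanish in the three optimality conditions shows the limit is a KKT point, i.e. $(\lim U^q,\lim Z^q)$ solves Problem~\ref{p:LQR}. Taking this KKT point as the reference $(U^\star,Z^\star,\Lambda^\star)$ in $\Phi^q$, monotone convergence of $\Phi^q$ together with a subsequence tending to $0$ forces $\Phi^q\to 0$, and since $G,H\succ 0$ the entire sequence $\{U^q,Z^q\}$ converges to that optimal solution.
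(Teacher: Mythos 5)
Your proposal is correct and follows essentially the same route as the paper: reduce to a convex quadratic in $(U,Z)$ with Lipschitz gradient, write the variational-inequality optimality conditions of the two proximal subproblems, absorb the nonseparable cross term via the descent lemma, Young's inequality with $\varepsilon=\sigma_{min}\{R_i\}$, and the strong convexity supplied by $R_i>0$ (which is exactly where the threshold $L_\delta+L_\delta^2/(2\sigma_{min}\{R_i\})$ on $H_i$ arises), and then close with Fej\'er monotonicity of a weighted distance to a KKT point. The only cosmetic difference is that the paper weights the dual block of the Lyapunov function by $\tfrac{1}{\rho}(\mathcal{L}\otimes I_n)$ rather than the plain norm $\tfrac{1}{\rho}\|\Lambda^q-\Lambda^\star\|^2$, a bookkeeping choice your parenthetical ``augmented if needed'' already anticipates.
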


\begin{remark}\label{r:referadmm}
Theorem \ref{t:convergence} extends the existing results on the ADMM algorithm to deal with the distributed linear quadratic synchronization control problem. Comparing with the existing studies of distributed optimization control \cite{QiuZR2016}, the objective functional here is not necessarily separable across variables, i.e., the coupling functional $J_1(U,Z)$ appears in the cost functional. The objective becomes nonseparable because not only the final synchronization state but also the time cumulation of the synchronization error and control energy are considered here. This nonseparable objective functional makes it hard to directly apply the classical ADMM technique \cite{BoydS2010}, therefore its variant is proposed as the new Algorithm \ref{a:ADMM}. It is also worth noticing that the method leading to Theorem \ref{t:convergence} is, in essence, consistent with the generalized ADMM method proposed in \cite{GaoX2017}, where the convex optimization problem with a nonseparable objective functional is studied.
\end{remark}

\section{Distributed Synchronization Control}\label{s:subproblemalgorithm}
Based on the convergence result presented in Theorem \ref{t:convergence}, the linear quadratic synchronization control problem (\ref{e:optimizationproblem}) is successfully divided into a Z-minimization step (\ref{e:admmZ}) and a U-minimization step (\ref{e:admmU}) in Algorithm \ref{a:ADMM}, which however is still centralized. In this section, distributed algorithms for steps (\ref{e:admmZ}) and (\ref{e:admmU}) are derived respectively.

\begin{theorem}\label{t:step1}
If the communication topology is undirected and connected, then the optimal solution of (\ref{e:admmZ}) can be obtained at the equilibrium point of
\begin{equation}\label{e:step1}
\begin{aligned}
\dot z_i=&-\left(2\mathcal{N}Q_i+2Q_{i\mathcal{N}}+G_i\right)z_i-\sum_{j\in\mathcal{V}_i}\left[\rho(z_i-z_j)+(\lambda_i^q-\lambda_j^q)\right]+2Q_{i\mathcal{N}}x_i^q(\mathcal{N})+2\sum_{k=0}^{\mathcal{N}-1}Q_ix_i^q(k)+G_iz_i^q.
\end{aligned}
\end{equation}
\end{theorem}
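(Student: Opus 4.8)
The plan is to recognize the $Z$-update (\ref{e:admmZ}) as an unconstrained, strictly convex quadratic program in $Z$ and to identify (\ref{e:step1}) with its gradient flow. First I would freeze $U=U^q$, $\Lambda=\Lambda^q$ and the reference point $Z^q$ and discard from $L_\rho$ every term not depending on $Z$: the control penalties $u_i^q(k)^TR_iu_i^q(k)$ are constant, and since $x_i^q(k)$ is generated by (\ref{e:systemoriginal}) from $U^q$ it is likewise $Z$-independent, so $e_i(k)=x_i^q(k)-z_i$. The objective in (\ref{e:admmZ}) then reduces to
\[
f(Z)=\sum_{i=1}^N\Big\{(x_i^q(\mathcal{N})-z_i)^TQ_{i\mathcal{N}}(x_i^q(\mathcal{N})-z_i)+\sum_{k=0}^{\mathcal{N}-1}(x_i^q(k)-z_i)^TQ_i(x_i^q(k)-z_i)\Big\}+\Lambda^{qT}(\mathcal{L}\otimes I_n)Z+\frac{\rho}{2}Z^T(\mathcal{L}\otimes I_n)Z+\frac12(Z-Z^q)^TG(Z-Z^q).
\]
Since $Q_i,Q_{i\mathcal{N}}\ge0$, and $\mathcal{L}\otimes I_n\ge0$ because $\mathcal{G}$ is undirected and connected, while $G=\mathrm{diag}\{G_1,\dots,G_N\}>0$, the Hessian satisfies $\nabla^2 f=\mathrm{diag}\{2\mathcal{N}Q_i+2Q_{i\mathcal{N}}+G_i\}+\rho(\mathcal{L}\otimes I_n)\ge\sigma_{min}\{G\}I>0$; hence $f$ is strongly convex and admits a \emph{unique} minimizer $Z^{q+1}$, characterized by $\nabla_Z f(Z^{q+1})=0$.

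Next I would compute $\nabla_{z_i}f$ term by term: the terminal term gives $2Q_{i\mathcal{N}}z_i-2Q_{i\mathcal{N}}x_i^q(\mathcal{N})$; the running-error sum gives $2\mathcal{N}Q_iz_i-2\sum_{k=0}^{\mathcal{N}-1}Q_ix_i^q(k)$; using the symmetry of $\mathcal{L}$ with $l_{ii}=\sum_ja_{ij}$, $l_{ij}=-a_{ij}$, the multiplier term contributes $\sum_{j\in\mathcal{V}_i}(\lambda_i^q-\lambda_j^q)$ and the quadratic penalty $\rho\sum_{j\in\mathcal{V}_i}(z_i-z_j)$; the proximal term gives $G_i(z_i-z_i^q)$. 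Summing and negating shows that the stacked flow $\dot Z=-\nabla_Z f(Z)$ is precisely (\ref{e:step1}); moreover each component uses only $z_j,\lambda_j^q$ of neighbours $j\in\mathcal{V}_i$ together with the locally stored trajectory $x_i^q(0),\dots,x_i^q(\mathcal{N})$, so (\ref{e:step1}) is indeed distributed. By construction its equilibria are exactly the zeros of $\nabla_Z f$, i.e.\ the single point $Z^{q+1}$.

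It then remains to confirm that this equilibrium is the point actually reached by (\ref{e:step1}). I would take $V(Z)=\tfrac12\|Z-Z^{q+1}\|^2$ and use strong monotonicity of $\nabla f$ to obtain
\[
\dot V=(Z-Z^{q+1})^T\dot Z=-(Z-Z^{q+1})^T\big(\nabla f(Z)-\nabla f(Z^{q+1})\big)\le-\sigma_{min}\{G\}\,\|Z-Z^{q+1}\|^2,
\]
so $V$ decreases to zero and $Z(t)\to Z^{q+1}$, in fact exponentially, from any initialization, which establishes the claim. This is a fairly direct argument; the step requiring the most care is the gradient bookkeeping in the second paragraph — keeping the factor-$2$ coefficients straight, handling the horizon sum $k=0,\dots,\mathcal{N}-1$, and rewriting $\Lambda^{qT}(\mathcal{L}\otimes I_n)Z$ and $Z^T(\mathcal{L}\otimes I_n)Z$ as the neighbour-indexed sums appearing in (\ref{e:step1}) — together with the observation that, since $\mathcal{L}\otimes I_n$ and the $Q$-matrices are only positive semidefinite, the uniqueness of the minimizer and the exponential stability of the flow hinge entirely on the proximal matrix $G$ being positive definite.
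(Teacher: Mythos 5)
Your proposal is correct and follows essentially the same route as the paper: identify (\ref{e:step1}) as the (negative) gradient flow of the strongly convex $Z$-subproblem objective and use a quadratic Lyapunov function to show convergence to the unique equilibrium, which is the KKT point of (\ref{e:admmZ}). If anything, your version is slightly cleaner, since you center the Lyapunov function at $Z^{q+1}$ and invoke strong monotonicity, whereas the paper takes $V_Z=\tfrac12 Z^TZ$ and implicitly drops the affine forcing terms when computing $\dot V_Z$ (i.e., it tacitly works in coordinates shifted to the equilibrium).
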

\begin{proof}
First of all, rewrite (\ref{e:step1}) in a compact form:
\begin{equation}\label{e:step1compact}
\begin{aligned}
\dot Z=&-\left[2\mathcal{N}Q+2Q_\mathcal{N}+G+\rho(\mathcal{L}\otimes I_n)\right]Z-(\mathcal{L}\otimes I_n)\Lambda^q+2\sum_{k=0}^{\mathcal{N}-1}QX^q(k)+2Q_\mathcal{N}X^q(\mathcal{N})+GZ^q,
\end{aligned}
\end{equation}
where $X^q(k)=col\{x_1^q(k),...x_N^q(k)\}$, $Q=diag\{Q_1,...,Q_N\}$ and $Q_\mathcal{N}=diag\{Q_{1\mathcal{N}},...,Q_{N\mathcal{N}}\}$.
Consider the Lyapunov function $V_Z=\frac{1}{2}Z^TZ$, which has the time derivative
\begin{equation}\label{e:dV_Z}
\dot V_Z=-Z^T\left[2\mathcal{N}Q+2Q_\mathcal{N}+G+\rho(\mathcal{L}\otimes I_n)\right]Z,
\end{equation}
and it is negative definite because $Q\ge0,Q_\mathcal{N}\ge0,(\mathcal{L}\otimes I_n)\ge0,G>0$. Consequently, the solution of differential equation (\ref{e:step1compact}) will converge to its equilibrium point that satisfies the KKT condition \cite{Boyd2004} of (\ref{e:admmZ}):
\begin{equation}\label{e:equilibrium}
\begin{aligned}
0=&\left[2\mathcal{N}Q+2Q_\mathcal{N}+G+\rho(\mathcal{L}\otimes I_n)\right]Z+(\mathcal{L}\otimes I_n)\Lambda^q-2\sum_{k=0}^{\mathcal{N}-1}QX^q(k)-2Q_\mathcal{N}X^q(\mathcal{N})-GZ^q.
\end{aligned}
\end{equation}
In conclusion, the solution of algorithm (\ref{e:step1}) will converge to the optimal solution of (\ref{e:admmZ}) since $2\mathcal{N}Q+2Q_\mathcal{N}+G+\rho(\mathcal{L}\otimes I_n)$ is positive definite.
\end{proof}

The following theorem presents the optimal controller for each agent individually to solve the linear quadratic synchronization control problem.

\begin{theorem}\label{t:step2}
Given $z_i^{q+1}, u_i^q,~i\in\{1,..,N\}$, the cost functional $L_\rho(U,Z^{q+1},\lambda^q)$ in (\ref{e:admmU}) is minimized by, for each step $k=\mathcal{N}-1,\mathcal{N}-2,...,0$, the control input
\begin{equation}\label{e:controller}
\begin{aligned}
u_i^\star(k)=&-U_i(k)\left[V_i(k)x_i(k)-W_i(k)\right],i\in\{1,2,...,N\},
\end{aligned}
\end{equation}
where
\begin{equation}\label{e:step1iteration}
\begin{aligned}
U_i&(k)=(R_i+H_i+B_i^TS_{i1}(k+1)B_i)^{-1},\\
V_i&(k)=B_i^TS_{i1}(k+1)A_i,\\
W_i&(k)=-\frac{1}{2}B_i^TS_{i2}^T(k+1)+H_iu_i^q(k),\\
S_{i1}&(k)=Q_i+V_i^T(k)U_i^T(k)(R_i+H_i)U_i(k)V_i(k)+\left[A_i-B_iU_i(k)V_i(k)\right]^TS_{i1}(k+1)\left[A_i-B_iU_i(k)V_i(k)\right],\\
S_{i2}&(k)=2W_i^T(k)U_i^T(k)\left[B_i^TS_{i1}(k+1)A_i-(R_i+H_i)U_i(k)V_i(k)-B_i^TS_{i1}(k+1)B_iU_i(k)V_i(k)\right]\\
    &+S_{i2}(k+1)\left[A_i-B_iU_i(k)V_i(k)\right]-2(z_i^{q+1})^TQ_i+2u_i^q(k)^TH_iU_i(k)V_i(k),\\
S_{i3}&(k)=(z_i^{q+1})^TQ_iz_i^{q+1}+W_i^T(k)U_i^T(k)\left[R_i+H_i+B_i^TS_{i1}(k+1)B_i\right]U_i(k)W_i(k)\\
    &+S_{i2}(k+1)B_iU_i(k)W_i(k)+S_{i3}(k+1)+u_i^q(k)^TH_i\left[u_i^q(k)-2U_i(k)W_i(k)\right],\\
S_{i1}&(\mathcal{N})=Q_{i\mathcal{N}},~S_{i2}(\mathcal{N})=-2(z_i^{q+1})^TQ_{i\mathcal{N}},\\
S_{i3}&(\mathcal{N})=(z_i^{q+1})^TQ_{i\mathcal{N}}z_i^{q+1}.
\end{aligned}
\end{equation}
The optimal objective value is given by
\begin{equation}\label{e:optimalvalue}
L_\rho^{\star q}=\sum_{i=1}^{N}L_i^\star(0),
\end{equation}
where $L_i^\star(0)=x_i^T(0)S_{i1}(0)x_i(0)+S_{i2}(0)x_i(0)+S_{i3}(0)$ and $x_i(0)$ is the initial state of the $i$-th agent, $i\in\{1,2,...,N\}$.
\end{theorem}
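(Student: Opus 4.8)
The plan is to solve the $U$-minimization step (\ref{e:admmU}) by dynamic programming, treating it as a collection of $N$ decoupled finite-horizon linear quadratic tracking problems. First I would observe that once $Z^{q+1}$ and $\Lambda^q$ are fixed, the only terms in $L_\rho(U,Z^{q+1},\Lambda^q)$ depending on $U$ are $\sum_{i=1}^N\left\{e_i^T(\mathcal{N})Q_{i\mathcal{N}}e_i(\mathcal{N})+\sum_{k=0}^{\mathcal{N}-1}\left[e_i(k)^TQ_ie_i(k)+u_i(k)^TR_iu_i(k)\right]\right\}$ together with the proximal term $\tfrac12(U-U^q)^TH(U-U^q)$; since $H$ is block diagonal with blocks $I_{\mathcal N}\otimes H_i$, both the running cost and the proximal penalty split over agents and over time steps. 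Substituting $e_i(k)=x_i(k)-z_i^{q+1}$ and $e_i(\mathcal{N})=x_i(\mathcal{N})-z_i^{q+1}$, each agent's contribution is an autonomous finite-horizon quadratic cost in $\{u_i(k)\}$ subject only to its own dynamics (\ref{e:systemoriginal}), so the minimization decouples completely across $i$. This is exactly the ``node dynamics separated from topology'' property highlighted after Algorithm \ref{a:ADMM}.

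Next I would set up the Bellman recursion for each agent. Define the cost-to-go $L_i(k)$ as the optimal value of the tail cost from step $k$ onward as a function of $x_i(k)$, and posit the quadratic ansatz $L_i(k)=x_i^T(k)S_{i1}(k)x_i(k)+S_{i2}(k)x_i(k)+S_{i3}(k)$. The terminal condition is read off directly from the $\mathcal{N}$-step term: $S_{i1}(\mathcal{N})=Q_{i\mathcal{N}}$, $S_{i2}(\mathcal{N})=-2(z_i^{q+1})^TQ_{i\mathcal{N}}$, $S_{i3}(\mathcal{N})=(z_i^{q+1})^TQ_{i\mathcal{N}}z_i^{q+1}$, matching the last lines of (\ref{e:step1iteration}). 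For the induction step I would write
\[
L_i(k)=\min_{u_i(k)}\Big\{(x_i(k)-z_i^{q+1})^TQ_i(x_i(k)-z_i^{q+1})+u_i(k)^TR_iu_i(k)+(u_i(k)-u_i^q(k))^TH_i(u_i(k)-u_i^q(k))+L_i(k+1)\big|_{x_i(k+1)=A_ix_i(k)+B_iu_i(k)}\Big\},
\]
expand $L_i(k+1)$ using the ansatz and the dynamics, and collect the resulting expression as a quadratic in $u_i(k)$. Its Hessian is $2(R_i+H_i+B_i^TS_{i1}(k+1)B_i)$, which is positive definite since $R_i>0$, $H_i>0$ and $S_{i1}(k+1)\ge 0$ (the latter being guaranteed inductively by the Riccati recursion with $Q_i\ge 0$, $Q_{i\mathcal{N}}\ge 0$); hence the minimizer is unique and is obtained by setting the gradient to zero, giving $u_i^\star(k)=-U_i(k)[V_i(k)x_i(k)-W_i(k)]$ with $U_i,V_i,W_i$ as defined. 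Substituting this minimizer back and again grouping by powers of $x_i(k)$ yields the recursions for $S_{i1},S_{i2},S_{i3}$; I would verify term by term that the coefficients collapse to exactly the expressions in (\ref{e:step1iteration}). Finally, evaluating $L_i(0)$ at the given initial state $x_i(0)$ and summing over $i$ gives $L_\rho^{\star q}=\sum_{i=1}^N L_i^\star(0)$, since the minimum of a sum of independent problems is the sum of the minima.

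The main obstacle I anticipate is purely bookkeeping rather than conceptual: the cross terms generated when substituting $A_ix_i(k)+B_iu_i(k)$ into the linear-in-$x$ part $S_{i2}(k+1)x_i(k+1)$ and into the quadratic part must be carefully reorganized, and one has to track the contributions of the constant terms $W_i(k)$ (which themselves carry the $-\tfrac12 B_i^TS_{i2}^T(k+1)$ piece coming from completing the square against the linear term) and of the proximal constant $H_iu_i^q(k)$. Getting the algebra to land precisely on the stated forms of $S_{i2}(k)$ and $S_{i3}(k)$ — in particular the terms $2u_i^q(k)^TH_iU_i(k)V_i(k)$ and $u_i^q(k)^TH_i[u_i^q(k)-2U_i(k)W_i(k)]$ — is where the calculation is most error-prone, so I would do the completion of the square in the symmetric form $(u_i(k)+U_i(k)V_i(k)x_i(k)-U_i(k)W_i(k))^T U_i(k)^{-1}(\,\cdot\,)$ to keep the matching transparent.
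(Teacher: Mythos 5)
Your proposal is correct and follows essentially the same route as the paper: backward dynamic programming with a quadratic value-function ansatz $L_i(k)=x_i^T(k)S_{i1}(k)x_i(k)+S_{i2}(k)x_i(k)+S_{i3}(k)$, decoupling across agents, terminal conditions read off from the $\mathcal{N}$-step term, a unique minimizer from the positive-definite Hessian $2(R_i+H_i+B_i^TS_{i1}(k+1)B_i)$ (with $S_{i1}(k+1)\ge 0$ carried in the induction hypothesis, exactly as the paper does), and back-substitution to obtain the recursions. The paper merely phrases this as explicit induction starting at $k=\mathcal{N}-1$; the content is identical.
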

\begin{proof}
Mathematical induction and dynamic programming are used in this proof. Fist, (\ref{e:step1iteration}) is verified for $k=\mathcal{N}-1$. According to the optimization principle \cite{BellmanR1972}, the optimal control input $u_i^\star(\mathcal{N}-1)$ must satisfy
\begin{equation}\label{e:optlast}
\begin{aligned}
u_i^\star(\mathcal{N}-1)=\arg \min \limits_{u_i(\mathcal{N}-1)} J_i(\mathcal{N}-1),
\end{aligned}
\end{equation}
where
\begin{equation}\label{e:Jn1}
\begin{aligned}
J_i(\mathcal{N}&-1)=(x_i(\mathcal{N})-z_i^{q+1})^TQ_{i\mathcal{N}}(x_i(\mathcal{N})-z_i^{q+1})+(x_i(\mathcal{N}-1)-z_i^{q+1})^TQ_{i}(x_i(\mathcal{N}-1)-z_i^{q+1})\\
&+u_i(\mathcal{N}-1)^TR_iu_i(\mathcal{N}-1)+[u_i(\mathcal{N}-1)-u_i^q(\mathcal{N}-1)]^TH_i[u_i(\mathcal{N}-1)-u_i^q(\mathcal{N}-1)].
\end{aligned}
\end{equation}
Substituting (\ref{e:systemoriginal}) into (\ref{e:Jn1}) and taking the gradient with respect to $u_i(\mathcal{N}-1)$, one obtains
\begin{equation}\label{e:Jn1gradient}
\begin{aligned}
\nabla J_i(\mathcal{N}&-1)=2B_i^TQ_{i\mathcal{N}}\left[A_ix_i(\mathcal{N}-1)+B_iu_i(\mathcal{N}-1)-z_i^{q+1}\right]+2R_iu_i(\mathcal{N}-1)+2H_i[u_i(\mathcal{N}-1)-u_i^q(\mathcal{N}-1)].
\end{aligned}
\end{equation}
Then, the KKT condition of (\ref{e:Jn1gradient}) can be derived, as
\begin{equation}\label{e:Jn1KKT}
\begin{aligned}
u_i^\star(\mathcal{N}-1)=&-(B_i^TQ_{i\mathcal{N}}B_i+R_i+H_i)^{-1}\left[B_i^TQ_{i\mathcal{N}}A_ix_i(\mathcal{N}-1)-B_i^TQ_{i\mathcal{N}}z_i^{q+1}-H_iu_i^q(\mathcal{N}-1)\right]\\
=&-U_i(\mathcal{N}-1)\left[V_i(\mathcal{N}-1)x_i(\mathcal{N}-1)-W_i(\mathcal{N}-1)\right].
\end{aligned}
\end{equation}
Obviously, the unique solution $u_i^\star(\mathcal{N}-1)$ presented by (\ref{e:Jn1KKT}) leads to the minimum cost $J^\star_i(\mathcal{N}-1)$ since $B_i^TQ_{i\mathcal{N}}B_i+R_i+H_i>0$. Then, substituting (\ref{e:Jn1KKT}) into (\ref{e:Jn1}), one can get the minimum cost as
\begin{equation}\label{e:Jn1min}
\begin{aligned}
J_i^\star(\mathcal{N}-1)=&x_i^T(\mathcal{N}-1)S_{i1}(\mathcal{N}-1)x_i(\mathcal{N}-1)+S_{i2}(\mathcal{N}-1)x_i(\mathcal{N}-1)+S_{i3}(\mathcal{N}-1).
\end{aligned}
\end{equation}
Therefore, (\ref{e:controller})-(\ref{e:step1iteration}) are satisfied for $k=\mathcal{N}-1$. Now, assume that (\ref{e:controller})-(\ref{e:step1iteration}) are correct for $k=\mathcal{M}$, i.e.,
\begin{equation}\label{e:optM}
\begin{aligned}
u_i^\star(\mathcal{M})=&-U_i(\mathcal{M})\left[V_i(\mathcal{M})x_i(\mathcal{M})-W_i(\mathcal{M})\right],\\
L_i^\star(\mathcal{M})=&x_i^T(\mathcal{M})S_{i1}(\mathcal{M})x_i(\mathcal{M})+S_{i2}(\mathcal{M})x_i(\mathcal{M})+S_{i3}(\mathcal{M}).
\end{aligned}
\end{equation}
and $S_{i1}(\mathcal{M})$ is positive semi-definite. From the optimization principle, again, it follows that the optimal control input $u_i^\star(\mathcal{M}-1)$ must minimum $J_i(\mathcal{M}-1)$, where
\begin{equation}\label{e:JnM}
\begin{aligned}
J_i(\mathcal{M}-1)=&L_i^\star(\mathcal{M})+(x_i(\mathcal{M}-1)-z_i^{q+1})^TQ_{i}(x_i(\mathcal{M}-1)-z_i^{q+1})+u_i(\mathcal{M}-1)^TR_iu_i(\mathcal{M}-1)\\
&+[u_i(\mathcal{M}-1)-u_i^q(\mathcal{M}-1)]^TH_i[u_i(\mathcal{M}-1)-u_i^q(\mathcal{M}-1)].
\end{aligned}
\end{equation}
Substituting (\ref{e:systemoriginal}) into (\ref{e:JnM}) and taking the gradient with respect to $u_i(\mathcal{M}-1)$, one obtains
\begin{equation}\label{e:JnMgradient}
\begin{aligned}
\nabla J_i(\mathcal{M}-1)=&2B_i^TS_{i1}(\mathcal{M})\left[A_ix_i(\mathcal{M}-1)+B_iu_i(\mathcal{M}-1)\right]+B_i^TS_{i2}^T(\mathcal{M})+2(R_i+H_i)u_i(\mathcal{M}-1)-2H_iu_i^q(\mathcal{M}-1).
\end{aligned}
\end{equation}
Then, the KKT condition of (\ref{e:JnMgradient}) can be obtained, as
\begin{equation}\label{e:JnMKKT}
\begin{aligned}
u_i^\star(\mathcal{M}-1)=&-(B_i^TS_{i1}(\mathcal{M})B_i+R_i+H_i)^{-1}[B_i^TS_{i1}(\mathcal{M})A_ix_i(\mathcal{N}-1)+\frac{1}{2}B_i^TS_{i2}^T(\mathcal{M})-H_iu_i^q(\mathcal{M}-1)]\\
=&-U_i(\mathcal{M}-1)\left[V_i(\mathcal{M}-1)x_i(\mathcal{M}-1)-W_i(\mathcal{M}-1)\right].
\end{aligned}
\end{equation}
Obviously, the unique solution $u_i^\star(\mathcal{M}-1)$ presented by (\ref{e:JnMKKT}) leads to the minimum cost $J^\star_i(\mathcal{M}-1)$ since $B_i^TS_{i1}(\mathcal{M})B_i+R_i+H_i>0$. Then substituting (\ref{e:JnMKKT}) into (\ref{e:JnM}), one can get the minimum cost as
\begin{equation}\label{e:JnMmin}
\begin{aligned}
J_i^\star(\mathcal{M}-1)&=x_i^T(\mathcal{M}-1)S_{i1}(\mathcal{M}-1)x_i(\mathcal{M}-1)+S_{i2}(\mathcal{M}-1)x_i(\mathcal{M}-1)+S_{i3}(\mathcal{M}-1),
\end{aligned}
\end{equation}
which indicates that (\ref{e:controller})-(\ref{e:step1iteration}) are satisfied for $k=\mathcal{M}-1$. In conclusion, the control input sequence $u_i^\star(k),k=0,1,...,\mathcal{N}-1$, minimizes the cost functional $L_\rho(U,Z^{q+1},\Lambda^q)$ in (\ref{e:admmU}) subject to (\ref{e:systemoriginal}), and the optimal objective value can be calculated by (\ref{e:optimalvalue}).
\end{proof}

With the results presented above, a distributed algorithm is established for the linear quadratic synchronization control problem.

\begin{algorithm}[htb]
\caption{Distributed Linear Quadratic Synchronization Control Design Algorithm}
\label{a:disLQR}
\begin{algorithmic}[1]
\Require Initialize $q=0,~\rho>0,~x_i^0(k)\in \mathbb{R}^n,~u_i^0(k)\in \mathbb{R}^m,~z^0_i\in \mathbb{R}^n,~\lambda_i^0, G_i>0,~H_i>({L_\delta}+\frac{L_\delta^2}{2\sigma_{min}\{R_i\}})I_m$, for all $i\in\{1,2,\cdots,N\},~k=0,1,2,\cdots,\mathcal{N}$. Set the stop condition $N_q>0$. For subsystem $i\in\{1,2,\cdots,N\}$ do in parallel:
\Repeat
\State Solve (\ref{e:step1}) with communication to obtain the equilibrium point $z_{ie}$;
\State Update the synchronization state $z_i^{q+1}=z_{ie}$;
\For{$k=\mathcal{N}-1$ to $0$}
\State Compute the control input $u_i(k)$ from (\ref{e:controller});
\EndFor
\State Update the state $x_i^{q+1}(k),~k=1,2,\cdots,\mathcal{N}$ from (\ref{e:systemoriginal});
\State Update the Lagrangian multiplier $\lambda_i^{q+1}$ according to (\ref{e:admmL});
\State Set $q=q+1$;
\Until{$q>N_q$}
\end{algorithmic}
\end{algorithm}

\section{Examples with Simulations}\label{s:Simulations}
\subsection{A Homogeneous System}\label{s:HomogeneousSim}
A scenario of three homogeneous agents is considered first. The edge set of the communication topology is $\{(1,2),(1,3)\}$ and the corresponding Laplacian matrix is
$$\mathcal{L}=
\begin{bmatrix}
2 &-1 &-1\\
-1 &1 &0\\
-1 &0 &1
\end{bmatrix}.
$$
Let the agents in (\ref{e:systemoriginal}) be neutrally unstable systems with
\begin{equation*}\label{e:neutrallyunstable}
A_i=
\begin{bmatrix}
1 &1\\
0 &1
\end{bmatrix},~
B_i=
\begin{bmatrix}
0\\
1
\end{bmatrix},~i\in\{1,2,3\}.~
\end{equation*}
The weighted matrices in cost functional (\ref{e:cost}) are set as $Q_i=I_2,~Q_{i\mathcal{N}}=I_2,~R_i=1,~i\in\{1,2,3\}$, and $\mathcal{N}=40$. Choose the parameters in Algorithm \ref{a:disLQR} as $G_i=I_2,~H_i=100,~i\in\{1,2,3\}$ and $\rho=1$. The initial condition is taken as $x_1(0)=[0,0]^T,~x_2(0)=[10,-4]^T,~x_3(0)=[-20,10]^T$, $u_i^0(k)=0,~z_i^0=[0,0]^T,~i\in\{1,2,3\}$. For comparison, the static state-feedback (SSF) method proposed in \cite{LZK2011} is also simulated to verify the effectiveness of Algorithm \ref{a:disLQR} derived in this paper.

Define the trajectories of synchronization error and control cost as $e(k)=(\mathcal{L}\otimes I_n)\times col\{x_1(k),x_2(k),x_3(k)\}$ and $\|u(k)\|=\|col\{u_1(k),u_2(k),u_3(k)\}\|$, respectively. The response trajectories generated by Algorithm \ref{a:disLQR} and the SSF method are depicted in  Fig. \ref{f:neutrallyunstable}, from which it can be seen that the controller designed by Algorithm \ref{a:disLQR} achieves synchronization faster and requires less control energy.
\begin{figure}[!htbp]
\centering
\begin{minipage}[c]{0.45\textwidth}
\centering
\includegraphics[scale=0.5]{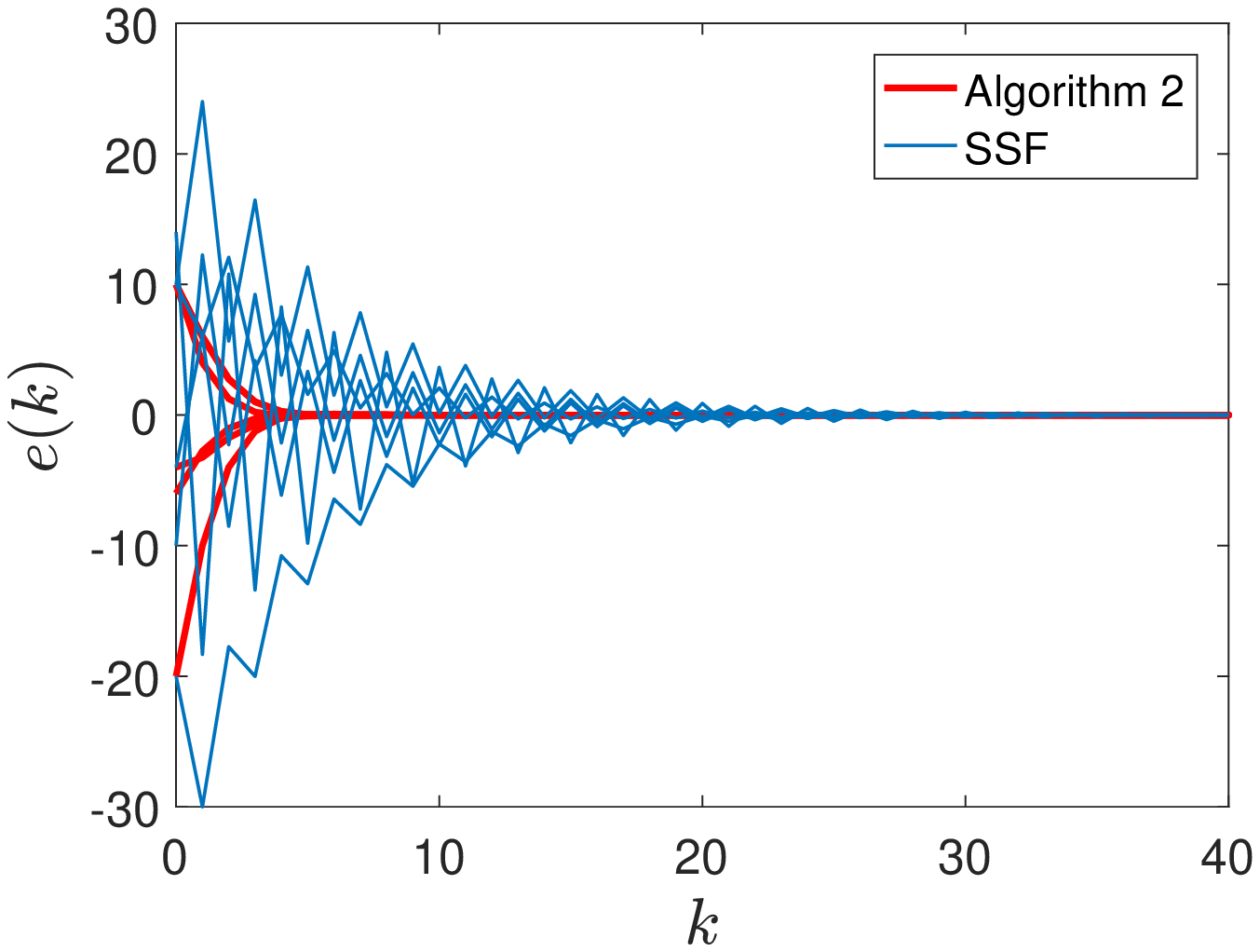}
\end{minipage}
\begin{minipage}[c]{0.45\textwidth}
\centering
\includegraphics[scale=0.5]{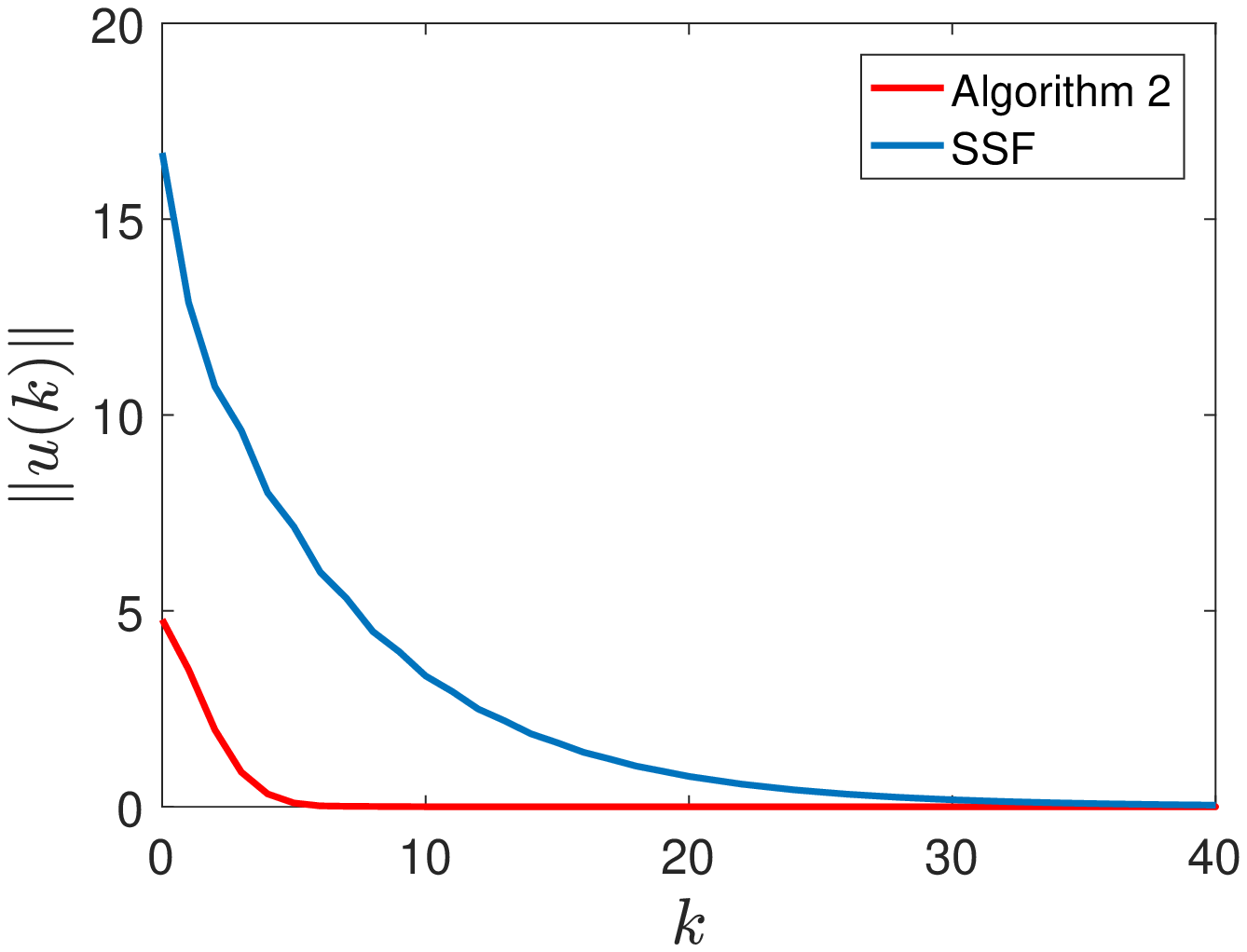}
\end{minipage}
\caption{The curves of $e(k)$ and $\|u(k)\|$ with neutrally unstable agents}
\label{f:neutrallyunstable}
\end{figure}

In addition, more scenarios such as stable, unstable and neutrally stable dynamics are studied to give a more comprehensive view of the advantages of Algorithm \ref{a:disLQR}. A quantitative comparison is displayed in Table \ref{t:comparison}. Here, the relative cost functional is denoted as $$J=e^T(\mathcal{N})Q_{\mathcal{N}}e(\mathcal{N})+\sum_{k=0}^{\mathcal{N}-1}\left[e(k)^TQe(k)+u(k)^TRu(k)\right],$$
where $R=diag\{R_1,R_2,R_3\},~u(k)=col\{u_1(k),u_2(k),u_3(k)\}$. In both scenarios, Algorithm \ref{a:disLQR} achieves a smaller relative cost and, the more unstable the dynamics are, the better effect the new technique has. From the unstable scenario, it is interesting to see that the Algorithm \ref{a:disLQR} always has a stable solution even if the unstable eigenvalues are far from the unit circle.

\begin{table*}[!htbp]
\centering
\caption{Quantitative Comparison}\label{t:comparison}
\begin{tabular}{l|cc|c|cl}
\hline
\hline
Scenario & $A_i$ & $B_i$ & Method & Relative Cost Functional & Synchronization State\\
\hline
\multirow{2}{*}{Stable}
& \multirow{2}{*}{$\begin{bmatrix} 0.2 & 1\\ 0 & 0.2 \end{bmatrix}$}
& \multirow{2}{*}{$\begin{bmatrix} 0 \\ 1 \end{bmatrix}$}
& ADMM  & 814.93     & $[0.004,~0.02]^T$ \\
& & & SSF   & 1416.82    & $[0,~0]^T$ \\
\hline
\multirow{2}{*}{Neutrally Stable}
& \multirow{2}{*}{$\begin{bmatrix} 0.2 & 1\\ 0 & 1 \end{bmatrix}$}
& \multirow{2}{*}{$\begin{bmatrix} 0 \\ 1 \end{bmatrix}$}
& ADMM  & 907.71    & $[0.33,~0.28]^T$ \\
& & & SSF   & 2219.66   & $[-4.12,~-3.29]^T$ \\
\hline
\multirow{2}{*}{Neutrally Unstable}
& \multirow{2}{*}{$\begin{bmatrix} 1 & 1\\ 0 & 1 \end{bmatrix}$}
& \multirow{2}{*}{$\begin{bmatrix} 0 \\ 1 \end{bmatrix}$}
& ADMM  & 1039.36   & $[-1.69,~0.02]^T$ \\
& & & SSF   & 6924.04   & $[-\infty,~-3.29]^T$ \\
\hline
\multirow{2}{*}{Unstable1}
& \multirow{2}{*}{$\begin{bmatrix} 1.2 & 1\\ 0 & 1 \end{bmatrix}$}
& \multirow{2}{*}{$\begin{bmatrix} 0 \\ 1 \end{bmatrix}$}
& ADMM  & 1.38e3    & $[-2.26,~0.47]^T$ \\
& & & SSF   & 2.25e4    & $[-\infty,~-\infty]^T$ \\
\hline
\multirow{2}{*}{Unstable2}
& \multirow{2}{*}{$\begin{bmatrix} 2 & 1\\ 0 & 1 \end{bmatrix}$}
& \multirow{2}{*}{$\begin{bmatrix} 0 \\ 1 \end{bmatrix}$}
& ADMM  & 8.74e3    & $[-4.28,~4.26]^T$ \\
& & & SSF   & NAN       & NAN \\
\hline
\end{tabular}
\end{table*}

%
%
%
%

\subsection{A Heterogeneous System}
Now, it is to demonstrate the effectiveness of Algorithm \ref{a:disLQR} in the heterogeneous scenario. Consider a network of agents described by (\ref{e:systemoriginal}) with
\begin{equation*}
\begin{aligned}
&A_1=
\begin{bmatrix}
1.2 &1 &2\\
0 &2.4 &2\\
2 &0   &1.5
\end{bmatrix},~
A_2=
\begin{bmatrix}
0 &1.3 &-0.7\\
0.5 &0.85 &0.85\\
0.5 &-0.65   &1.35
\end{bmatrix},
A_3=
\begin{bmatrix}
0.3 &1 &0\\
0 &1.2 &1\\
0 &0   &0.4
\end{bmatrix},
\\&B_1=
\begin{bmatrix}
0 & 1 & 1\\
2 & 0 & -1
\end{bmatrix}^T,
B_2=
\begin{bmatrix}
0 & 0 & 1\\
0 & 2 & 0
\end{bmatrix}^T,
B_3=
\begin{bmatrix}
0 & 1 & 0\\
-1 & 0 & -2
\end{bmatrix}^T.
\end{aligned}
\end{equation*}

\begin{figure}[!htb]
\centering
\begin{minipage}[c]{0.45\textwidth}
\centering
\includegraphics[scale=0.5]{}
\end{minipage}
\begin{minipage}[c]{0.45\textwidth}
\centering
\includegraphics[scale=0.5]{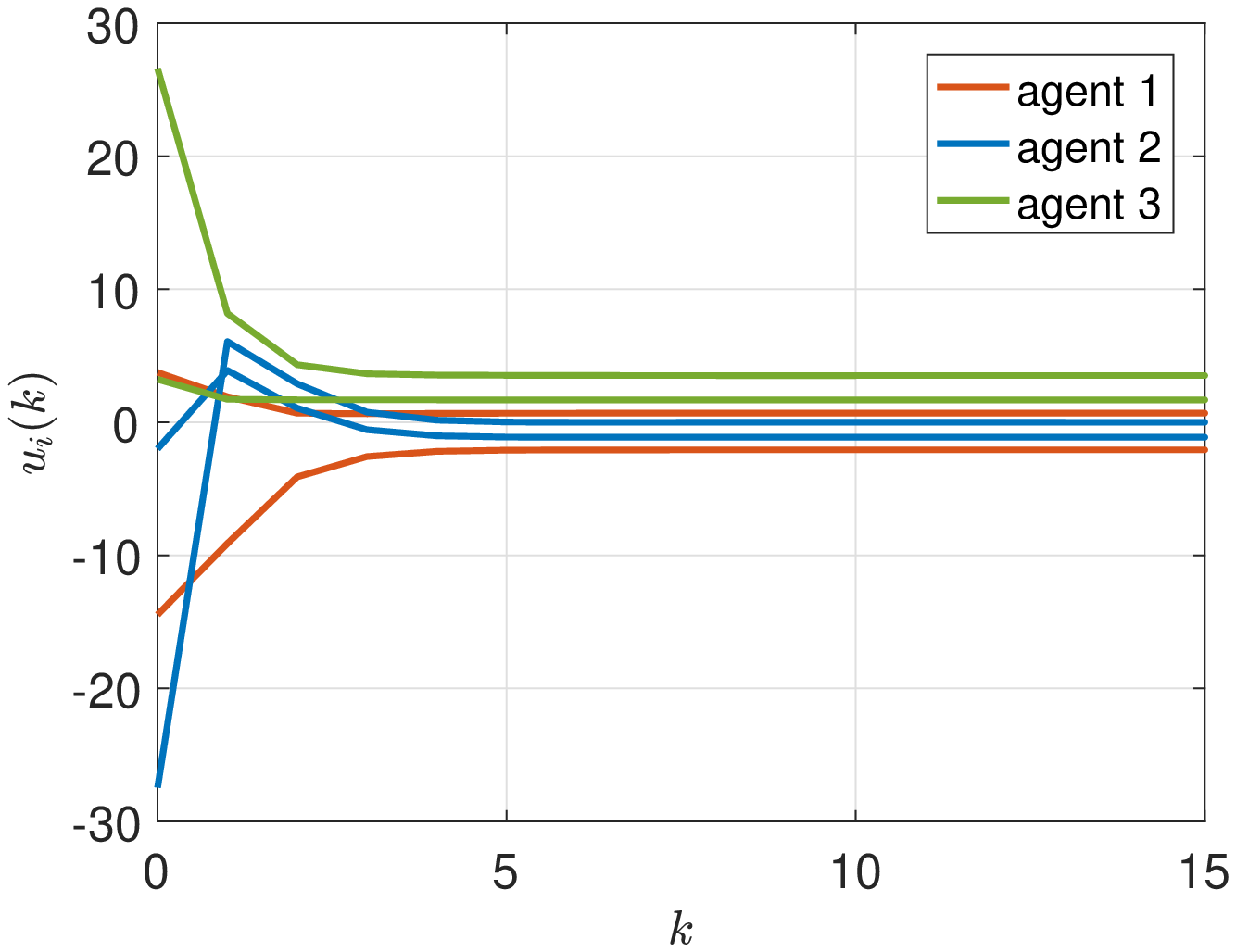}
\end{minipage}
\caption{The curves of output and input signals for each agent}
\label{f:heterogeneous}
\end{figure}
Assume that the communication topology is given, the same as that in Subsection \ref{s:HomogeneousSim}. The weighted matrices in cost functional (\ref{e:cost}) are set as $Q_1=diag\{0,8,13\},~Q_{1\mathcal{N}}=diag\{0,8,1\},~Q_2=diag\{0,3,5\},~Q_{1\mathcal{N}}=diag\{0,5,1\},~Q_3=diag\{0,4,15\},~Q_{3\mathcal{N}}=diag\{0,12,5\},~R_i=I_2,~i\in\{1,2,3\}$, and $\mathcal{N}=50$. In this scenario, the weighted matrices $Q_i$ and $Q_{1\mathcal{N}}$ are selected as positive semi-definite matrices, i.e., $y_i=[0,1,1]x_i,~i\in\{1,2,3\}$, to demonstrate the output synchronization ability of the proposed algorithm. Choose the parameters in Algorithm \ref{a:disLQR} as $G_i=I_3,~H_i=1e3\times I_2,~i\in\{1,2,3\}$, and $\rho=1$. The initial condition is taken as $x_1(0)=[-5,20,0]^T,~x_2(0)=[1,-4,20]^T,~x_3(0)=[-2,-20,3]^T$, $u_i^0(k)=[0,0]^T,~z_i^0=[0,0,0]^T,~i\in\{1,2,3\}$. The trajectories of the last two components of the states and the control inputs are shown in Fig. \ref{f:heterogeneous}, which indicates that the outputs of the agents synchronize rapidly and the control inputs converge (to different values) to maintain the synchronization.

\section{Conclusions}\label{s:conclusion}
The distributed optimal synchronization problem with linear quadratic cost is solved in this paper for multi-agent systems with a undirected communication topology. The optimal synchronization problem is formulated as a distributed optimization problem with a linear quadratic cost functional that integrates the energies of the synchronization error signal and of the input signal. By the application of a modified ADMM technique, the optimal synchronization control problem is separated into the synchronization step and the optimal control step. These two subproblems are then solved by distributed numerical algorithms based on the Lyapunov method and dynamic programming. The performances of the proposed design are is demonstrated by numerical examples for both homogenous and heterogenous linear multi-agent systems with either stable or unstable dynamics.

\appendices
\section{Proof of Theorem \ref{t:convergence}}
Before proceeding to the convergence analysis, a useful lemma is first introduced.
\begin{lemma} \cite{DroriY2015}\label{l:convexinequality}
For any convex function $f$ on $\mathbb{R}^m$, which is continuously differentiable with gradient $\nabla f$ satisfying the Lipschitz continuous condition
\begin{equation}\label{Lipschitzf}
\|\nabla f(x)-\nabla f(y)\|\le L_f\|x-y\|,~~\forall x,y\in \mathbb{R}^m,
\end{equation}
one has
\begin{equation}\label{e:convexeq}
\begin{aligned}
f(x)\le& f(y)+\nabla f(z)^T(x-y)+\frac{L_f}{2}\|x-z\|^2,\forall x,y,z\in \mathbb{R}^m.
\end{aligned}
\end{equation}
\end{lemma}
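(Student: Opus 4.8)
The plan is to derive the stated inequality by combining two classical consequences of the hypotheses at carefully chosen anchor points: the \emph{descent lemma} coming from Lipschitz continuity of $\nabla f$, and the \emph{first-order characterization of convexity}. Neither ingredient requires machinery beyond the fundamental theorem of calculus and the definition of convexity, so the proof is short once the two bounds are assembled correctly.

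First I would establish the descent lemma: for all $x,z\in\mathbb{R}^m$,
$$f(x)\le f(z)+\nabla f(z)^T(x-z)+\frac{L_f}{2}\|x-z\|^2.$$
To obtain it, write $f(x)-f(z)=\int_0^1 \nabla f(z+t(x-z))^T(x-z)\,dt$ by the fundamental theorem of calculus along the segment from $z$ to $x$, subtract $\nabla f(z)^T(x-z)$ from both sides, and bound the resulting integrand using the Cauchy--Schwarz inequality together with the Lipschitz condition (\ref{Lipschitzf}), which gives $\|\nabla f(z+t(x-z))-\nabla f(z)\|\,\|x-z\|\le L_f\, t\,\|x-z\|^2$. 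Integrating $t$ over $[0,1]$ then produces the factor $L_f/2$.

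Second, I would invoke convexity of $f$ in its first-order (supporting-hyperplane) form, evaluated at the point $z$ with reference point $y$:
$$f(z)\le f(y)+\nabla f(z)^T(z-y),$$
which is simply the standard inequality $f(y)\ge f(z)+\nabla f(z)^T(y-z)$ rearranged.

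Finally, I would substitute this upper bound on $f(z)$ into the descent lemma. The two gradient terms combine linearly as $\nabla f(z)^T\big[(z-y)+(x-z)\big]=\nabla f(z)^T(x-y)$, while the quadratic remainder is left untouched, yielding exactly (\ref{e:convexeq}). The only genuine subtlety---the step I would flag as the main obstacle---is the bookkeeping of anchor points: the descent lemma must be applied to the pair $(x,z)$ whereas convexity is applied to the pair $(z,y)$, so that both linear contributions are evaluated at the common gradient $\nabla f(z)$ and the quadratic remainder attaches to $\|x-z\|^2$ rather than to $\|x-y\|^2$. Any other pairing of the three points fails to reproduce the asymmetric form of the claimed inequality, so getting this alignment right is the crux of the argument.
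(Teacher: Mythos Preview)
Your proof is correct. The descent lemma applied to the pair $(x,z)$ and the first-order convexity inequality applied to the pair $(z,y)$ combine exactly as you describe, and the bookkeeping of anchor points is handled cleanly.

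Note, however, that the paper does not actually supply its own proof of this lemma: it is stated with a citation to \cite{DroriY2015} and then invoked as a tool in the proof of Theorem~\ref{t:convergence}. So there is no in-paper argument to compare against; you have filled in a step the authors simply imported from the literature. Your derivation is the standard one and matches what one finds in the cited source and in convex-analysis texts, so nothing further is needed.
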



Next, the proof of Theorem \ref{t:convergence} is presented.
\begin{proof}
By substituting (\ref{e:systemoriginal}) into (\ref{e:cost}), one has
\begin{equation}\label{e:costpar}
J(U,Z)=J_1(U,Z)+J_2(U),
\end{equation}
where
\begin{equation}\label{e:costJ1J2}
\begin{aligned}
J_1&(U,Z)=\sum_{i=1}^{N}\left\{\left(A_i^\mathcal{N}x_i(0)+\sum_{j=0}^{\mathcal{N}-1}A_i^{\mathcal{N}-1-j}B_iu_i(j)\right.-z_i\right)^TQ_{i\mathcal{N}}\left(A_i^\mathcal{N}x_i(0)+\sum_{j=0}^{\mathcal{N}-1}A_i^{\mathcal{N}-1-j}B_iu_i(j)-z_i\right)\\
&+\sum_{k=0}^{\mathcal{N}-1}\left[\left(A_i^kx_i(0)+\sum_{j=0}^{k-1}A_i^{k-1-j}B_iu_i(j)-z_i\right)^T\left.Q_{i}\left(A_i^kx_i(0)+\sum_{j=0}^{k-1}A_i^{k-1-j}B_iu_i(j)-z_i\right)\right]\right\},\\
J_2&(U)=\sum_{i=1}^{N}\sum_{k=0}^{\mathcal{N}-1}u_i^T(k)R_iu_i(k).
\end{aligned}
\end{equation}

It is easy to see that $J_1(U,Z)$ is convex with respect to $u_i,z_i$ and $J_2(U)$ is strongly convex with respect to $u_i$ since $Q_{i\mathcal{N}}\geq0,Q_i\geq0,R_i>0$. Then, the gradient of $J_1(U,Z)$ 
can be obtained as
\begin{equation}\label{gradientui}
\nabla_{u_i}J_1=2
\begin{bmatrix}
\begin{aligned}
&\left(A_i^{\mathcal{N}-1}B_i\right)^TQ_{i\mathcal{N}}\left(A_i^\mathcal{N}x_i(0)+\sum_{j=0}^{\mathcal{N}-1}A_i^{\mathcal{N}-1-j}B_iu_i(j)-z_i\right)\\
&+\sum_{k=1}^{\mathcal{N}-1}\left(A_i^{k-1}B_i\right)^TQ_i\left(A_i^kx_i(0)+\sum_{j=0}^{k-1}A_i^{k-1-j}B_iu_i(j)-z_i\right)
\end{aligned}\\[10ex]
\begin{aligned}
\left(A_i^{\mathcal{N}-2}B_i\right)^TQ_{i\mathcal{N}}\left(A_i^\mathcal{N}x_i(0)+\sum_{j=0}^{\mathcal{N}-1}A_i^{\mathcal{N}-1-j}B_iu_i(j)-z_i\right)\\
+\sum_{k=2}^{\mathcal{N}-1}\left(A_i^{k-2}B_i\right)^TQ_i\left(A_i^kx_i(0)+\sum_{j=0}^{k-1}A_i^{k-1-j}B_iu_i(j)-z_i\right)
\end{aligned}\\[5ex]
\vdots\\
B_i^TQ_{i\mathcal{N}}\left(A_i^\mathcal{N}x_i(0)+\sum_{j=0}^{\mathcal{N}-1}A_i^{\mathcal{N}-1-j}B_iu_i(j)-z_i\right)
\end{bmatrix},
\end{equation}
\begin{equation}\label{gradientzi}
\begin{aligned}
\nabla_{z_i}J_1=&-2Q_{i\mathcal{N}}\left(A_i^\mathcal{N}x_i(0)+\sum_{j=0}^{\mathcal{N}-1}A_i^{\mathcal{N}-1-j}B_iu_i(j)-z_i\right)-Q_i\left(x_i(0)-z_i\right)\\
&-2\sum_{k=1}^{\mathcal{N}-1}Q_i\left(A_i^kx_i(0)+\sum_{j=0}^{k-1}A_i^{k-1-j}B_iu_i(j)-z_i\right),
\end{aligned}
\end{equation}
which can be rewritten in a compact form as
\begin{equation}\label{gradient}
\setcounter{equation}{33}
\begin{aligned}
\nabla J_1=
\begin{bmatrix}
\nabla_U J_1\\
\nabla_Z J_1
\end{bmatrix}=&L_0(A_i,B_i,Q_i,Q_{i\mathcal{N}})
\begin{bmatrix}
x_1(0)\\
\vdots\\
x_N(0)
\end{bmatrix}
+L_\Delta(A_i,B_i,Q_i,Q_{i\mathcal{N}})
\begin{bmatrix}
U\\
Z
\end{bmatrix}.
\end{aligned}
\end{equation}

Therefore, the cost functional $J_1(U,Z)$ satisfies
\begin{equation}\label{e:Lipschitz}
\begin{aligned}
\|\nabla J_1(U_1,Z_1)-\nabla J_1(U_2,Z_2)\|&=\left\| L_\Delta(A_i,B_i,Q_i,Q_{i\mathcal{N}},R_i)
\begin{bmatrix}
U_1-U_2\\
Z_1-Z_2
\end{bmatrix}\right\|\\
&\le\|L_\Delta(A_i,B_i,Q_i,Q_{i\mathcal{N}},R_i)\|
\left\|
\begin{bmatrix}
U_1\\
Z_1
\end{bmatrix}-
\begin{bmatrix}
U_2\\
Z_2
\end{bmatrix}\right\|\\
&\le L_\delta
\left\|
\begin{bmatrix}
U_1\\
Z_1
\end{bmatrix}-
\begin{bmatrix}
U_2\\
Z_2
\end{bmatrix}\right\|,
~~~\forall
\begin{bmatrix}
U_1\\
Z_1
\end{bmatrix},~
\begin{bmatrix}
U_2\\
Z_2
\end{bmatrix}.
\end{aligned}
\end{equation}
where $L_\delta$ is a Lipschitz constant for $\nabla J_1(U,Z)$. In the following, the convergence of Algorithm \ref{a:ADMM} is proved.
According to (\ref{e:Lagrangian}), the augmented Lagrangian can be written as
\begin{equation}\label{e:Lagrangianp}
\begin{aligned}
L_\rho(U,Z,\Lambda)=&J(U,Z)+\Lambda^T(\mathcal{L}\otimes I_n)Z+\frac{\rho}{2}Z^T(\mathcal{L}\otimes I_n)Z,
\end{aligned}
\end{equation}
By the optimality condition \cite{FacchineiF2003}, the optimal solution of subproblems (\ref{e:admmZ}) and (\ref{e:admmU}) satisfies
\begin{equation}\label{e:saddlepointZ}
\begin{aligned}
&(Z-Z^{q+1})^T\left[\nabla_ZJ_1(U^{q},Z^{q+1})+G(Z^{q+1}-Z^q)+\rho(\mathcal{L}\otimes I_n)Z^{q+1}+(\mathcal{L}\otimes I_n)\Lambda^{q}\right]\ge 0,~~~\forall Z\in \mathbb{R}^n,
\end{aligned}
\end{equation}
and
\begin{equation}\label{e:saddlepointU}
\begin{aligned}
&(U-U^{q+1})^T\left[\nabla_UJ_1(U^{q+1},Z^{q+1})+2\bar RU^{q+1}+H(U^{q+1}-U^q)\right]\ge 0,~~~\forall U\in \mathbb{R}^m,
\end{aligned}
\end{equation}
where $\bar R=diag\{I_\mathcal{N}\otimes R_1,I_\mathcal{N}\otimes R_2,\cdots,I_\mathcal{N}\otimes R_N\}$. By the Lipschitz continuity and Lemma \ref{l:convexinequality}, one can get
\begin{equation}\label{e:gradienteq1}
\begin{aligned}
&(Z-Z^{q+1})^T\nabla_Z J_1(U^q,Z^{q+1})+(U-U^{q+1})^T\nabla_U J_1(U^{q+1},Z^{q+1})\\
=&(Z-Z^{q+1})^T\nabla_Z J_1(U^q,Z^{q+1})+(U-U^{q+1})\nabla_U J_1(U^q,Z^{q+1})\\
&+(U-U^{q+1})^T\left[\nabla_UJ_1(U^{q+1},Z^{q+1})-\nabla_U J_1(U^q,Z^{q+1})\right]\\
\le&(Z-Z^{q+1})^T\nabla_ZJ_1(U^q,Z^{q+1})+(U-U^{q+1})\nabla_UJ_1(U^q,Z^{q+1})+L_\delta\|U-U^{q+1}\|\|U^{q+1}-U^q\|\\
=&(Z-Z^{q+1})^T\nabla_ZJ_1(U^q,Z^{q+1})+(U-U^q)^T\nabla_UJ_1(U^q,Z^{q+1})+(U^q-U^{q+1})^T\nabla_UJ_1(U^q,Z^{q+1})\\
&+L_\delta\|U-U^{q+1}\|\|U^{q+1}-U^q\|\\
\le&J_1(U,Z)-J_1(U^{q},Z^{q+1})+(U^q-U^{q+1})^T\nabla_UJ_1(U^q,Z^{q+1})+L_\delta\|U-U^{q+1}\|\|U^{q+1}-U^q\|,
\end{aligned}
\end{equation}
and
\begin{equation}\label{e:gradienteq2}
\begin{aligned}
&J_1(U,Z)-J_1(U^{q},Z^{q+1})+(U^q-U^{q+1})^T\nabla_UJ_1(U^q,Z^{q+1})+L_\delta\|U-U^{q+1}\|\|U^{q+1}-U^q\|\\
\le&J_1(U,Z)-J_1(U^{q},Z^{q+1})+J_1(U^{q},Z^{q+1})-J_1(U^{q+1},Z^{q+1})+\frac{L_\delta}{2}\|U^q-U^{q+1}\|^2+L_\delta\|U-U^{q+1}\|\|U^{q+1}-U^q\|\\
\le&J_1(U,Z)-J_1(U^{q+1},Z^{q+1})+\left(\frac{L_\delta}{2}+\frac{L_\delta^2}{4\sigma_{min}\{R_i\}}\right)\|U^q-U^{q+1}\|^2+\sigma_{min}\{R_i\}\|U-U^{q+1}\|^2,
\end{aligned}
\end{equation}
where $\sigma_{min}\{R_i\}$ denotes the minimum value of the eigenvalues of $R_1,R_2,\cdots,R_N$, and the last two inequalities follow from (\ref{Lipschitzf}) and (\ref{e:convexeq}). Combining (\ref{e:saddlepointZ}) and (\ref{e:saddlepointU}) yields
\begin{equation}\label{e:saddlepointUZ}
\begin{aligned}
0\le& (Z-Z^{q+1})^T\left[\nabla_ZJ_1(U^{q},Z^{q+1})+G(Z^{q+1}-Z^q)+\rho(\mathcal{L}\otimes I_n)Z^{q+1}+(\mathcal{L}\otimes I_n)\Lambda^{q}\right]\\
&+(U-U^{q+1})^T\left[\nabla_UJ_1(U^{q+1},Z^{q+1})+2\bar RU^{q+1}+H(U^{q+1}-U^q)\right]\\
\le&J_1(U,Z)-J_1(U^{q+1},Z^{q+1})+(Z-Z^{q+1})^TG(Z^{q+1}-Z^q)+(U-U^{q+1})^TH(U^{q+1}-U^q)\\
&+\sigma_{min}\{R_i\}\|U-U^{q+1}\|^2+\left(\frac{L_\delta}{2}+\frac{L_\delta^2}{4\sigma_{min}\{R_i\}}\right)\|U^q-U^{q+1}\|^2+U^T\bar RU-U^{q+1}\bar RU^{q+1}\\
&-\sigma_{min}\{R_i\}\|U-U^{q+1}\|^2+(Z-Z^{q+1})^T\left[\rho(\mathcal{L}\otimes I_n)Z^{q+1}+(\mathcal{L}\otimes I_n)\Lambda^{q}\right]\\
=&J(U,Z)-J(U^{q+1},Z^{q+1})+(Z-Z^{q+1})^TG(Z^{q+1}-Z^q)+(U-U^{q+1})^TH(U^{q+1}-U^q)\\
&+\left(\frac{L_\delta}{2}+\frac{L_\delta^2}{4\sigma_{min}\{R_i\}}\right)\|U^q-U^{q+1}\|^2+(Z-Z^{q+1})^T\left[\rho(\mathcal{L}\otimes I_n)Z^{q+1}+(\mathcal{L}\otimes I_n)\Lambda^{q}\right].
\end{aligned}
\end{equation}
It is easy to verify that
\begin{equation}\label{e:UGU}
\begin{aligned}
(Z-Z^{q+1})^TG(Z^{q+1}-Z^q)=&-\frac{1}{2}(Z-Z^{q+1})^TG(Z-Z^{q+1})+\frac{1}{2}(Z-Z^{q})^TG(Z-Z^{q})\\
&-\frac{1}{2}(Z^q-Z^{q+1})^TG(Z^q-Z^{q+1}),
\end{aligned}
\end{equation}
and
\begin{equation}\label{e:ZHZ}
\begin{aligned}
&(U-U^{q+1})^TH(U^{q+1}-U^q)+\left(\frac{L_\delta}{2}+\frac{L_\delta^2}{4\sigma_{min}\{R_i\}}\right)\|U^q-U^{q+1}\|^2\\
\le&-\frac{1}{2}(U-U^{q+1})^TH(U-U^{q+1})+\frac{1}{2}(U-U^{q})^TH(U-U^{q})\\
&-\frac{1}{2}(U^q-U^{q+1})^T\left(H-{L_\delta}I+\frac{L_\delta^2I}{2\sigma_{min}\{R_i\}}\right)(U^q-U^{q+1}).
\end{aligned}
\end{equation}
Then, from (\ref{e:admmL}), it follows that
\begin{equation}\label{e:LrL}
\begin{aligned}
&(Z-Z^{q+1})^T\left[\rho(\mathcal{L}\otimes I_n)Z^{q+1}+(\mathcal{L}\otimes I_n)\Lambda^{q}\right]\\
=&(\Lambda-\Lambda^{q+1})^T\left[\frac{1}{\rho}(\mathcal{L}\otimes I_n)(\Lambda^{q+1}-\Lambda^q)-(\mathcal{L}\otimes I_n)Z^{q+1}\right]+(Z-Z^{q+1})^T(\mathcal{L}\otimes I_n)\Lambda^{q+1}\\
=&-\frac{1}{2}(\Lambda-\Lambda^{q+1})^T\frac{1}{\rho}(\mathcal{L}\otimes I_n)(\Lambda-\Lambda^{q+1})+\frac{1}{2}(\Lambda-\Lambda^{q})^T\frac{1}{\rho}(\mathcal{L}\otimes I_n)(\Lambda-\Lambda^{q})\\
&-\frac{1}{2}(\Lambda^q-\Lambda^{q+1})^T\frac{1}{\rho}(\mathcal{L}\otimes I_n)(\Lambda^q-\Lambda^{q+1})-(\Lambda-\Lambda^{q+1})^T(\mathcal{L}\otimes I_n)Z+(Z-Z^{q+1})^T(\mathcal{L}\otimes I_n)\Lambda .
\end{aligned}
\end{equation}

Substituting (\ref{e:UGU}), (\ref{e:ZHZ}) and (\ref{e:LrL}) into (\ref{e:saddlepointUZ}), gives
\begin{equation}\label{e:saddlepointUZ2}
\begin{aligned}
0\le&J(U,Z)-(\Lambda-\Lambda^{q+1})^T(\mathcal{L}\otimes I_n)Z-J(U^{q+1},Z^{q+1})+(Z-Z^{q+1})^T(\mathcal{L}\otimes I_n)\Lambda\\
&+\frac{1}{2}
\begin{bmatrix}
U-U^q\\
Z-Z^q\\
\Lambda-\Lambda^q
\end{bmatrix}^T
M_1
\begin{bmatrix}
U-U^q\\
Z-Z^q\\
\Lambda-\Lambda^q
\end{bmatrix}-\frac{1}{2}
\begin{bmatrix}
U-U^{q+1}\\
Z-Z^{q+1}\\
\Lambda-\Lambda^{q+1}
\end{bmatrix}^T
M_1
\begin{bmatrix}
U-U^{q+1}\\
Z-Z^{q+1}\\
\Lambda-\Lambda^{q+1}
\end{bmatrix}-\frac{1}{2}
\begin{bmatrix}
U^q-U^{q+1}\\
Z^q-Z^{q+1}\\
\Lambda^q-\Lambda^{q+1}
\end{bmatrix}^T
M_2
\begin{bmatrix}
U^q-U^{q+1}\\
Z^q-Z^{q+1}\\
\Lambda^q-\Lambda^{q+1}
\end{bmatrix},
\end{aligned}
\end{equation}
where
\begin{equation}\label{e:THETAM1M2}
\begin{aligned}
M_1&=\begin{bmatrix}
H &0 &0\\
0 &G &0\\
0 &0 &\frac{1}{\rho}(\mathcal{L}\otimes I_n)
\end{bmatrix},M_2=\begin{bmatrix}
H-{L_\delta I}+\frac{L_\delta^2 I}{2\sigma_{min}\{R_i\}} &0 &0\\
0 &G &0\\
0 &0 &\frac{1}{\rho}(\mathcal{L}\otimes I_n)
\end{bmatrix}.
\end{aligned}
\end{equation}
Letting $U=U^\star,Z=Z^\star,\Lambda=\Lambda^\star$, in which the superscript $\star$ represents the optimal solution, and denoting
\begin{equation}\label{e:THETA}
\Theta=
\begin{bmatrix}
U^T &Z^T &\Lambda^T
\end{bmatrix}^T,
\end{equation}
one obtains
\begin{equation}\label{e:optimalineq}
\begin{aligned}
&\frac{1}{2}
(\Theta^\star-\Theta^q)^TM_1(\Theta^\star-\Theta^q)-\frac{1}{2}(\Theta^\star-\Theta^{q+1})^TM_1(\Theta^\star-\Theta^{q+1})-\frac{1}{2}(\Theta^q-\Theta^{q+1})^TM_2(\Theta^q-\Theta^{q+1})\\
\ge&J(U^{q+1},Z^{q+1})-J(U^\star,Z^\star)+(\Lambda^\star-\Lambda^{q+1})^T(\mathcal{L}\otimes I_n)Z^\star-(Z^\star-Z^{q+1})^T(\mathcal{L}\otimes I_n)\Lambda^\star\\
\ge&0.
\end{aligned}
\end{equation}
If $G_i>0, H_i>({L_\delta}+\frac{L_\delta^2}{2\sigma_{min}\{R_i\}})I_m $, it can be concluded that $M_1\ge0,M_2\ge0$. From (\ref{e:optimalineq}), one can obtain
\begin{equation}\label{e:convergencesq}
\begin{aligned}
&\frac{1}{2}
(\Theta^\star-\Theta^q)^TM_1(\Theta^\star-\Theta^q)-\frac{1}{2}(\Theta^\star-\Theta^{q+1})^TM_1(\Theta^\star-\Theta^{q+1})\ge\frac{1}{2}(\Theta^q-\Theta^{q+1})^TM_2(\Theta^q-\Theta^{q+1})\ge0,
\end{aligned}
\end{equation}
which means that $\left\{(\Theta^\star-\Theta^q)^TM_1(\Theta^\star-\Theta^q),~q=1,2,\cdots\right\}$ is a decreasing sequence. Then, from $(\Theta^\star-\Theta^q)^TM_1(\Theta^\star-\Theta^q)\ge0$, it follows that the sequence $\left\{(\Theta^\star-\Theta^q)^TM_1(\Theta^\star-\Theta^q),~q=1,2,\cdots\right\}$ is convergent and $\{\Theta^q,~q=1,2,\cdots\}$ is bounded. Therefore, it follows from (\ref{e:convergencesq}) that
\begin{equation}\label{e:convergenceq}
\begin{aligned}
\lim_{q \to +\infty}(\Theta^q-\Theta^{q+1})^TM_2(\Theta^q-\Theta^{q+1})=0,
\end{aligned}
\end{equation}
which implies that $\lim_{q \to +\infty}(U^q-U^{q+1})=0,~\lim_{q \to +\infty}(Z^q-Z^{q+1})=0$ and $\lim_{q \to +\infty}(\mathcal{L}\otimes I_n)(\Lambda^q-\Lambda^{q+1})=0$. Hence, the sequences $(U^q,Z^q)$ and $(U^{q+1},Z^{q+1})$ converge to the same cluster points $(U^\infty,Z^\infty)$. From the first inequality of (\ref{e:saddlepointUZ}) and (\ref{e:LrL}), one gets
\begin{equation}\label{limitsaddlepoint}
\begin{aligned}
&\begin{bmatrix}
U-U^{\infty}\\
Z-Z^{\infty}\\
\Lambda-\Lambda^{\infty}
\end{bmatrix}^T\left\{
\begin{bmatrix}
\nabla_UJ(U^{\infty},Z^{\infty})\\
\nabla_ZJ(U^{\infty},Z^{\infty})\\
0
\end{bmatrix}+
\begin{bmatrix}
0\\
(\mathcal{L}\otimes I_n)\Lambda^{\infty}\\
-(\mathcal{L}\otimes I_n)Z^\infty
\end{bmatrix}\right\}\ge0.
\end{aligned}
\end{equation}
By the ensemble variational inequality \cite{FacchineiF2003}, it consequently follows that $(U^\infty,Z^\infty,\Lambda^\infty)$ is an optimal solution. Therefore, $(U^q,Z^q,\Lambda^q)$ converges to the optimal solution of the distributed linear quadratic synchronization control problem (\ref{e:optimizationproblem}).
\end{proof}



%



\section*{Acknowledgment}

This work is supported by the National Natural Science Foundation (NNSF) of China under Grants 61673026
and 61528301, and the HongKong Research Grants Council under the GRF Grant CityU 11234916.

\ifCLASSOPTIONcaptionsoff
  \newpage
\fi



%
%
%

\bibliographystyle{ieeetr}
\bibliography{distributedLQR}
\end{document}